\newcommand{\be}{\begin{equation}}
\newcommand{\ee}{\end{equation}}
\newcommand{\e}[1]{\label{#1}\end{equation}}
\newcommand{\bea}{\begin{eqnarray}}
\newcommand{\eea}{\end{eqnarray}}
\newcommand{\ea}[1]{\label{#1}\end{eqnarray}}
\newcommand{\bse}{\begin{subequations}}
\newcommand{\ese}{\end{subequations}}
\newcommand{\simleq}{\; \raisebox{-0.4ex}{\tiny$\stackrel{{\textstyle<}}{\sim}$}\;}
\newcommand{\myeq}[2]{\stackrel{\mathclap{\mbox{\tiny #1 }}}{#2}}
\newcommand{\iu}{i\mkern1mu}
\newcommand{\func}[2]{\operatorname{#1}\left[#2\right]} 
\newcommand{\funcwrt}[3]{\operatorname{#1}_{\mathrm{#3}}\left[#2\right]} 
\newcommand{\setfunc}[2]{\left\{\operatorname{#1}\left[#2\right]\right\}} 
\newcommand{\funclims}[3]{\operatorname*{#1}\limits_{#2}\left[#3\right]} 
\newcommand{\ket}[1]{\left| #1 \right>}
\newcommand{\bra}[1]{\left< #1 \right|}
\newcommand{\eval}[1]{\mathbb{E}\left[#1\right]} 
\newcommand{\est}[1]{\hat{#1}} 
\newcommand{\Op}[1]{\mathbf{#1}} 
\newcommand{\close}[1]{\overline{#1}} 
\newcommand{\extreal}{\close{\mathbb{R}}} 
\newcommand{\norm}[1]{\lVert #1 \rVert} 
\NewDocumentCommand{\setvalO}{ O{O} O{I} }{\setfunc{\text{\ensuremath{#1}} \mapsto Val}{#2}} 
\NewDocumentCommand{\ID}{ O{I} o } 
{ \IfNoValueTF {#2}
{\ensuremath{(#1,\setfunc{Val}{#1})}}
{\ensuremath{(#1,\setfunc{Val}{#2})}}
}
\titleformat{\section}{\centering\Large\sffamily}{\thesection.\quad}{0em}{}{}
\titlespacing*{\section}{0pt}{*4}{*1}
\titleformat{\subsection}{\raggedright\large\bf\color{blue!30!black}}{\thesubsection.\quad}{0em}{}{}
\titlespacing*{\subsection}{0pt}{*3}{*1}
\titleformat{\subsubsection}{\raggedright\normalfont\bf}{\thesubsubsection.\quad}{0em}{}{}
\titlespacing*{\subsubsection}{0pt}{*2}{*0.5}
\newtheorem{proposition}{Proposition}
\newtheorem{theorem}{Theorem}
\newcommand{\thistheoremname}{}
\newtheorem{genericthm}[theorem]{\thistheoremname}
\theoremstyle{definition}
\newtheorem{definition}{Definition}
\newtheorem{remark}{Remark}
\newtheorem{assumption}{Assumption}
\newtheorem{question}{Question}
\crefname{assumption}{Assumption}{Assumptions}
\NewDocumentEnvironment{genthm}{m m}{%
    \begin{#1}[#2]%
}{%
    \end{#1}%
}
\newcommand{\myitem}[1]{%
\item[#1]\protected@edef\@currentlabel{#1}%
}
\begin{document}
\title{The Quantum Cram\'er-Rao lower bound\\ (Why quantum computers won't work I)}
\keywords{Cram\'er-Rao lower bound, Fisher information, Quantum metrology, Quantum parameter estimation, Uncertainty relation, Heisenberg limit, Quantum computation}
\maketitle

\begin{centering}
\author{Liam P. McGuinness}\\
\address{Institute for Quantum Optics, Ulm University, 89081, Ulm, Germany}\\
\email{Email: \href{mailto:liam@grtoet.com}{liam@grtoet.com}}\\
\end{centering}

\begin{abstract}
Quantum information science currently poses a troubling contradiction. It can be summarized as:
\begin{enumerate}
\item To factor efficiently, quantum computers must perform exponentially precise energy estimation.
\item Exponentially precise energy estimation is impossible according to both the Heisenberg limit and the Cram\'er-Rao lower bound in quantum metrology.
\end{enumerate}
It is surprising that such a dramatic contradiction exists between two accepted predictions of quantum mechanics, and yet this contradiction it is not widely discussed. It is even more surprising when one notes it is not a minor discrepancy – the two statements differ by an exponential margin. Not only that, whether (1) or (2) is correct is of fundamental importance to the realisation of an important class of quantum technologies. If (2) is correct, then quantum computers are much less powerful than expected. This work resolves the above contradiction by defining a computational model in which a wide range of computational problems are not solvable in polynomial time. We then show that this computational model applies to the majority of quantum algorithms, including Shor's algorithm.
\end{abstract}

\section{Never the twain shall meet}
The contradiction noted in the abstract tells a story of two different worlds, or more aptly, two contradictory stories told concurrently by these different worlds. Asked the following question:
\begin{quotation}
With what error does quantum mechanics predict one can estimate the unknown period $L$ of a Hamiltonian; with access to $n$ qubits, and total time $t$?
\end{quotation}
The answers given by members of the quantum metrology/sensing and the quantum computing scientific communities, differ by an amount which is frankly disconcerting\footnote{That they should differ at all is a cause for concern if we expect a mathematically rigorous answer, but that they differ by an exponential margin!}. Before presenting their answers, let's first rephrase the question to a form more familiar to those in quantum sensing:
\begin{quotation}
With access to $n$ qubits and total time $t$, with what uncertainty does quantum mechanics predict one can estimate the unknown frequency $\omega$ of an applied Hamiltonian\footnote{In the literature $\Op{H}(\omega, t)$ is generally assumed to have sinusoidal time-dependence, but we can rephrase more generally to estimate any Fourier component of an arbitrary Hamiltonian.}?
\end{quotation}
Apart from a constant factor (to account for the change of units), the two questions are identical -- therefore we should expect the answers to the questions to be identical. In terms of frequency precision, the answers are
\begin{enumerate}
\myitem{(A1)}\label{item:A1} \textbf{Quantum computing:} With high probability, we can obtain an estimate of the frequency that has an error $\epsilon \simeq O(2^{-\func{poly}{n,t}}) \simeq O(\func{exp}{-\func{poly}{n,t}})$\footnote{Here we are using `big Oh' notation, and $\func{poly}{n,t}$ is a non-negative function polynomial in $n$ and $t$.}.
\myitem{(A2)}\label{item:A2} \textbf{Quantum metrology/sensing:} No estimate of the frequency, $\est{\omega}$ can have an uncertainty lower than $\Delta \est{\omega} \geq \frac{1}{nt}$. This limit is known as the Heisenberg limit (HL) in quantum metrology and derives from a theorem known as the quantum Cram\'er-Rao lower bound.
\end{enumerate}


Whilst the mathematical terminology used by each community to express their answer differs slightly, it should be clear that the two answers are fundamentally incompatible. With a little work to redefine the probabilistic error of a quantum computation in terms of the \hyperlink{Mean squared error}{mean squared error}, and defining the \hyperlink{uncertainty}{uncertainty} as the square-root of the mean squared error, one can show that the answers differ by an exponential margin\footnote{This requires that the range of possible frequencies is bounded, i.e. $a \leq \omega \leq b$ for finite $a,b$. Then as the probability to obtain an estimate with low error can be arbitrarily increased whilst maintaining exponential scaling, the mean squared error must reduce exponentially, even allowing a low probability to obtain an estimate with large error.}.

That such a striking contradiction exists can be readily checked by anyone who reads the literature, speaks to members of these communities or queries AI using the above pointed line of questioning. Supposedly, quantum computers can solve a range of problems, not only finding the unknown frequency/period, but also the unknown energy eigenvalues, angle of rotation or phase of a Hamiltonian (equivalently Unitary) with a precision that improves exponentially in time. The exponential improvement over (known) classical algorithms requires entanglement between $n$ qubits in the computational register and is sometimes expressed as a precision improving exponentially with $n$. In each case there is a conflict with established bounds in quantum metrology. In quantum sensing, the best precision one can estimate those same parameters in a Hamiltonian, even with entanglement of $n$ qubits, improves linearly with time (and with $n$).

Put bluntly, these two different fields of quantum science are in gross disagreement. From a sociological perspective it is interesting to ask how such a disagreement can persist for so long without being discussed, but our focus here will be on the mathematical issues.

Before presenting work to resolve this contradiction, it is worth reassuring many readers that the above statements are \emph{in essence} correct; that I am not misrepresenting the work of either community, or leaving out important qualifying details that would significantly change the answer. To ease that task, the next section gives a more complete literature survey showing that the claims presented in \ref{item:A1}, \ref{item:A2} above are indeed a fair summary of the current state of understanding in the two fields.

\section{Efficient quantum computation implies exponentially precise estimation}\label{sec:QC}

Whilst it is uncommon in computer science to analyse the performance of an algorithm keeping the input length/problem length fixed whilst varying the number of qubits, if we make this adjustment we can readily compare computational performance to the Heisenberg limit. In particular, assume the unitary evolution of $n$ qubits depends on some parameter $\theta$ taking on a fixed value in the interval $0 \leq \theta < 1$. We would like to estimate the value of $\theta$ in an allocated time $t$ and with access to $n$ qubits, and we are interested in how the estimation precision improves with $t$ and $n$.

The class of efficient computational solutions to this problem have a precision that improves roughly exponentially in time. This statement, is in effect, simply the definition of an efficient algorithm. To see this in detail, note that an efficient quantum computation returns the value of $\theta$ to $n$ bit precision in a time that increases polynomially with $n$, giving a performance
\[ \Delta\est{\theta} \sim 1/2^n \quad\quad \myeq{$t=\func{poly}{n}$}{\sim}\quad\quad 1/2^{\func{poly}{t}} \sim 2^{-t}.\] 
Expressing $\theta$ in decimal units, increasing $n$ by one, implies a precision improvement by a factor of 10, for a marginal increase in time.

A special case of this problem is quantum phase estimation, where unitary evolution is parametrized by $\Op{U}(2\pi \iu \varphi)$. Any good textbook on quantum computation will confirm that the unknown parameter $\varphi$, called a quantum phase, can be estimated with a precision that improves exponentially in time. Some examples are
\begin{enumerate}
\item Kitaev, Shen and Vyalyi's ``Classical and Quantum Computation" \cite{Kitaev2002}, in particular \S\,13.5.3 ``Determining the phase with exponential precision". If the title of the section didn't already give the game away, you can read on to where KSV note for the quantum phase estimation problem, their algorithm ``allows us to determine $\varphi$ with precision $1/2^{2n+2}$ \emph{efficiently} in linear time with constant memory".
\item Nielsen and Chuang's ``Quantum Computation and Quantum Information" \cite{Nielsen2000}, in particular Ch.\,5.2 ``Phase estimation" \S\,5.2.1 ``Performance and requirements". Nielsen and Chuang provide error analysis of the quantum fourier transform when applied to quantum phase estimation, showing that $\varphi$ can be estimated to a precision $\Delta \est{\varphi} = 2^{-n}$ in time $t = O(n^3)$, with access to slightly more than $2n$ qubits.
\item John Preskill's ``Lecture Notes for Physics 229: Quantum Information and Computation" \cite{Preskill2015} (which can be found online). In Ch.\,6.2 ``Periodicity" Preskill notes that the quantum fourier transform can find the period $L$ of a function with a precision $\Delta \est{L} = 2^{-n}$ in $\func{poly}{n}$ time. ``Our quantum algorithm can be applied to finding, in poly(n) time, the period of any function that we can compute in poly(n) time. Efficient period finding allows us to efficiently solve a variety of (apparently) hard problems, such as factoring an integer, or evaluating a discrete logarithm." In \S 6.4 ``Phase estimation" Preskill then connects period finding to quantum phase estimation, so exponentially fast period finding implies the quantum phase $\varphi$ can be measured with exponential accuracy.
\item For those who prefer videos to books, I recommend Ryan O'Donnell's ``Quantum Computer Programming in 100 Easy Lessons" on \href{https://www.youtube.com/playlist?list=PLm3J0oaFux3bF48kurxGR6jrmPaQf6lkN}{Youtube}. In Lessons 60 -- 66 O'Donnell discusses how rotation estimation is related to factoring, and he describes a quantum algorithm for estimation $\theta$ to $n$ digits of accuracy, i.e. $\Delta \est{\theta} \simeq 10^{-n}$ in time polynomial in $n$ (here $\theta$ is the rotation angle).
\end{enumerate}

Beyond the textbooks, a good starting reference to check the veracity of \ref{item:A1} is a 2017 paper by Yosi Atia and Dorit Aharonov \cite{Atia2017}. Atia and Aharonov clearly state that Shor's algorithm implies exponentially precise energy estimation
, and they are not alone, nor the first, in having made this observation. In several works, Berry, Childs, Kothari and co-authors \cite{Berry2014, Berry2015} presented algorithms for Hamiltonian simulation with logarithmic run-time (exponential precision), building in turn on work by Seth Lloyd \cite{Lloyd1996} and others in generating exponential speed-ups \cite{Abrams1999, Lloyd2014}.


One might ask whether I am being selective in the quantum algorithms discussed above, and how this relates to the performance of all quantum algorithms? It turns out that several results indicate exponentially precise measurements are required for all efficient quantum algorithms. In \cite{Berry2015}, BCK note that efficient algorithms for Hamiltonian simulation cover the entire class of efficient quantum algorithms including integer factorization. Their error analysis exponentially violates the uncertainty limit in quantum metrology, and by connection, due to the computational class they establish this implies all efficient quantum algorithms.

From a different perspective, the work “Grand Unification of Quantum Algorithms” \cite{Martyn2021} establishes an equivalence between the performance of quantum algorithms and parameter estimation through a framework called quantum signal processing \cite{Low2017}. Formalising the often made observation in quantum information science -- that quantum algorithms including amplitude amplification, Grover’s search, and Hamiltonian simulation operate in essentially the same manner -- the authors show that the performance of each algorithm is determined by the precision they achieve for estimating the unknown phase angles in pulses (an unknown signal) applied to the computational register.

Finally, that quantum computation and precision measurement are intimately connected is extensively discussed by Childs, Preskill and Renes in \cite{Childs2000}. It is worth pointing out that their discussion is self-contradictory. For example CPR state that the accuracy of the quantum fourier transform for frequency estimation is limited to $\Delta \est{\omega} \geq 1/t$, rather than $\Delta \est{\omega} \sim 1/2^t$ as claimed in several of their other works and indeed later in the same paper; ``The accuracy is limited by an energy-time uncertainty relation of the form $T \Delta \omega \sim 1$" \cite{Childs2000}.




\begin{remark}[Discrete or real-valued]
It is important to note that the performance of these algorithms does not depend on the solution $\theta$ taking on only discrete values, i.e. that $\theta$ can be represented exactly with an $n$-bit binary expansion. The same precision is obtained if $\theta$ is a real number. That is clear in both the Nielsen, Chuang \cite{Nielsen2000} and Kitaev, Shen, Vyalyi \cite{Kitaev2002} analysis. It is important to make this observation, because the contradiction we raise is based on the quantum Cram\'er-Rao lower bound, which as we will later show, explicitly assumes that $\theta$ is a real number. One might therefore argue (incorrectly) that the bound does not apply to quantum computers. Noting that the performance of quantum computers (an exponentially improving precision) remains regardless of whether $\theta$ can be expressed exactly with $n$ bits or not, we can dismiss this objection.
\end{remark}
\subsubsection*{Summary}
The message presented by the quantum computing literature is clear, to solve a range of problems efficiently, quantum computers must estimate a parameter in the Hamiltonian with a precision that improves exponentially in time and/or number of qubits. Expressing these total resources as any polynomial function $\Gamma \sim \func{poly}{n,t}$ (and keeping energy fixed), quantum computers display a precision
\[ \Delta \est{\theta} \sim 2^{-\Gamma},\]
where $\theta$ is an arbitrary parameter in a Hamiltonian applied to the computer. That an efficient computation must exhibit exponential improvement is uncontroversial, since this is the very definition of an efficient algorithm as a function of input length for a problem on a bounded interval. Slightly less well discussed, but also uncontroversial is that quantum algorithms can be recast in terms of parameter estimation. It is this connection in particular that we use to prove a contradiction regarding the performance of quantum computers. A further key component to our argument is the observation that quantum computers achieve exponential precision from a single measurement on an entangled system of qubits, i.e. after a single computational run.




\section{The Heisenberg limit in metrology}\label{sec:HL}
In contrast to the expected performance of quantum algorithms, a theoretical bound exists in the field of quantum metrology which places a much stronger restriction on the uncertainty one can estimate an unknown parameter $\theta$ in a Hamiltonian $\Op{H}(\theta)$ in a given time and with $n$ qubits. Quantitatively, with a single particle ($n = 1$) and measurement time $t$, if one can achieve an uncertainty in estimating $\theta$ of
\[ \Delta \est{\theta} = \lambda/t,\]
then quantum mechanics predicts an uncertainty of 
\be\label{eq:HL} \Delta \est{\theta} \geq \lambda/(nt) \ee
using $n$ entangled particles in the same time. I.e. an uncertainty improvement by a factor of $n$. \Cref{eq:HL} is known as the Heisenberg limit (HL) in quantum metrology, and $\lambda$ is a change of units factor, that converts from inverse time to the units of $\theta$ (actually $\est{\theta}$).

As far as I am aware, there is no debate on the Heisenberg limit in the community, I do not know any physicists that challenge \cref{eq:HL}. You can see this for yourself by looking at any of the references included in \cite{McGuinness2021} or \cite{McGuinness2023b}, by looking at review papers on quantum metrology \cite{Giovannetti2004, Giovannetti2006, Pezze2018, Degen2017} or even the Wikipedia page on \href{https://en.wikipedia.org/wiki/Quantum_metrology}{quantum metrology} (see the section on scaling). Everyone is clear that the precision improves by (at most) a factor of $n$ using entanglement, and linearly with time. There is no argument on this point anywhere in quantum metrology. Note that even some quantum computing experts seem to believe the Heisenberg limit, John Preskill for example, is an author on this paper \cite{Zhou2018}, which explicitly states that the best precision is improves as $1/n$.

Beyond the theoretical analysis, there are numerous experimental tests of \cref{eq:HL} and none of these experimental works claim to achieve a precision beyond that dictated by the HL. Any review article on quantum metrology will reproduce these claims, without actually checking if they are correct\footnote{This is standard procedure for review articles.} \cite{Pezze2018, Degen2017, Schnabel2017}. Or you might be interested in reviews that analyse the claims critically and find that none of the works even get close to the HL \cite{McGuinness2021,McGuinness2023b, ThomasPeter2011, McGuinness2022, McGuinness2023, McGuinness2023a}.

Consider for a moment the counter-factual to the Heisenberg limit, i.e. that efficient quantum computers are realisable. If true, it would mean we could implement measurement techniques with exponentially better precision than currently in use! What then is the entire field of atomic clocks or gravitational wave detection doing; can we really expect that they have left this free lunch to be eaten by someone else? Is it that they aware of such measurement schemes and they simply choose not to implement them, or are they oblivious to the possibility that their painstaking experiments which take years to construct, and which have been analysed for decades can be improved so radically?

\subsubsection*{Summary}
\Cref{sec:QC,sec:HL} summarized a dilemma posed by the quantum science literature. One part of the dilemma is 
\begin{enumerate}
\item To solve a range of problems efficiently (sometimes exponentially faster than the best known classical algorithms), quantum computers estimate the unknown phase, energy, angle, or frequency of a Hamiltonian with a precision improving exponentially with time and number of qubits.
\end{enumerate}
The range of problems include factorisation (Shor's algorithm), Hamiltonian simulation, phase estimation, principle component estimation, Hamiltonian graph problem, hidden subgroup, discrete log... The second part of the dilemma is
\begin{enumerate}
\setcounter{enumi}{1}
\item The Heisenberg limit in quantum metrology restricts the precision attainable from any measurement on $n$ qubits in time $t$. It states that the uncertainty one can estimate an unknown parameter $\theta$ in a Hamiltonian is bounded by $\Delta \est{\theta} \geq \lambda/(nt)$, for constant $\lambda$.
\end{enumerate}

It is clear that these two statements are incompatible, implying that at least one of them is incorrect. In the next section, we go through derivation of the HL in detail. It is based upon a theorem known as the quantum Cram\'er-Rao lower bound.

\begin{remark}[The Heisenberg limit applies to all computation]
The power of the Heisenberg limit is the complete generality of the bound. If for example we can show the Heisenberg limit forbids exponentially accurate factoring, then this results applies to all algorithms. We would have proven that no algorithm can factor efficiently, not just the best known algorithm, no algorithm. While this observation should make one stand up and notice just how powerful the approach can be when applied to computer complexity theory, it should also make one sceptical as to the likelihood that it will work, since it seems to indicate a route to proving $P \neq NP$.
\end{remark}

\section{The (quantum) Cram\'er-Rao lower bound}
\subsection{Classical statistical parameter estimation}
Although often presented in a complicated and dense manner, the Cram\'er-Rao lower bound (CRLB) in statistics is actually straightforward to understand. It tells us the amount of information that a random sample from a probability distribution can contain on some unknown parameter $\theta$. In particular, the information on $\theta$ is bounded by how much, on average, the probability distribution depends on $\theta$. Later we will develop the quantum Cram\'er-Rao lower bound, it is connected to the classical CRLB by the following observation; the outcome of a quantum measurement on a quantum state vector is equivalent to a random sample from a probability distribution\footnote{The Born rule postulate of quantum mechanics.}. But for now we summarize the CRLB for parameter estimation in classical statistics.

\begin{definition}[\hypertarget{classical estimation problem}{Classical estimation problem}]
Consider an arbitrary measuring device which is used to measure a signal and which outputs a data-set $\bm{x} := \{x_1, x_2, \dotsc, x_R\}$, described by a collection of real numbers\footnote{Here the outcome of a single measurement is denoted $x_i$, so $\bm{x}$ is the vector outcome of $R$ measurements.}
. Parametrizing the signal by $\theta$ allows us to define an \emph{estimation problem} as the task of estimating the unknown value of $\theta$, using \emph{only} the information provided by the data-set.
\end{definition}
The CRLB is motivated by the following simple question.
\begin{question}\label{qu:crlb}
How well can one perform the \hyperlink{classical estimation problem}{Classical estimation problem}. I.e. best estimate the unknown value of $\theta$ from the $R$-point data-set outputted by the measurement device.
\end{question}
To give a rigorous answer to this question, some mathematical assumptions and definitions are required.
\begin{assumption}[Scalar, classical parameter]\label{ass:classical}
Whilst not necessary, for this analysis we assume that $\theta$ is a scalar, one-dimensional parameter taking on values in the real numbers. I.e. $\theta \in \Theta$, where $\Theta \subseteq \mathbb{R}$ is called the \emph{parameter space}. We further assume $\theta$ is a \emph{classical} parameter, meaning that it has a definite, fixed value. Whilst we do not know the value of $\theta$, it's value remains fixed from measurement to measurement, and can in principle (given enough resources) be estimated to arbitrary precision. One could say that there is no intrinsic uncertainty associated with the value of $\theta$.

A less discussed caveat to this assumption, is that $\Theta$ be a (Lebesgue) measurable subset of the reals. To avoid the mathematical technicalities involved in defining Lebesgue measurable sets and working with them, we are going to simply assume that $\Theta$ is a single interval in the reals\footnote{Actually, Cram\'er made this same assumption in his original derivation of the bound \cite{Cramer1946}. A single interval is Lebesgue measurable.}. I.e. that $a \leq \theta \leq b$, for $a,b$ real numbers with $a \leq b$.
\end{assumption}

\begin{remark}[Lebesgue measurable]
In the following, all of the sets and functions are assumed to be Lebesgue measurable\footnote{Don't worry too much about this requirement if you are not familiar with measure theory. In the end we assign a \emph{measure} to each estimate, which quantifies how good the estimate is -- namely how far away from $\theta$ the estimate is. Requiring all of the mathematical sets and functions be measurable ensures that we can assign measures to them.}.
\end{remark}

\begin{assumption}[Probabilistic data-set]\label{ass:pdf}
We assume that the response of the measurement device to $\theta$, is perfectly characterised. I.e. for each value of $\theta$, we know the corresponding output of the measurement device, described by a mapping $\func{p}{\cdot} : \Theta \to \mathcal{X}$. We additionally assume that the mapping is given by a probabilistic function, i.e. we allow that the data has some inherent randomness and is described by a probability density function (PDF) $\func{p}{\bm{x};\theta}$, which we call the \emph{measurement PDF}. This is a mapping to a probability space $\func{p}{\cdot} : \Theta \to (\mathcal{X}, \bm{P})$, where $(\bm{x}\in \mathcal{X}, p\in \bm{P})$ denotes the probability to observe the measurement result $\bm{x}$.
We additionally assume the measurement PDF is not many-to-one, such that for two distinct values of $\theta$, $\func{p}{\bm{x};\theta}$ does not map to the same point in probability space. Denoting the outputs $\func{p}{\bm{x};\theta_1} \mapsto (\bm{x_1}, p_1)$, $\func{p}{\bm{x};\theta_2} \mapsto (\bm{x_2}, p_2)$ we have $\theta_1 \neq \theta_2 \implies (\bm{x_1}, p_1) \neq (\bm{x_2}, p_2)$.
\end{assumption}
\begin{remark}
The assumption of a probabilistic data-set makes the estimation problem non-trivial. Since if the measuring device is perfectly characterised and its output is deterministic then we can estimate $\theta$ with no uncertainty. I.e. if the mapping $\func{p}{\cdot} : \Theta \to \mathcal{X}$ is bijective on codomain $\mathbb{R}^R$, then the best estimator is the inverse $\func{p^{-1}}{\cdot}$.
\end{remark}
\begin{definition}[A (posterior) estimate, estimator and error]\label{def:est}
We define a (posterior) \emph{estimator} as a function acting on the data-set which returns a (posterior) \emph{estimate} $\est{\theta} \in \mathbb{R}$ of the value of $\theta$. For a data-set taking on elements in $\mathcal{X}$, the function $\func{est}{\cdot} : \mathcal{X} \to \mathbb{R}$ is the estimator, and its output is the estimate\footnote{Or slightly abusing notation, $\est{\theta} = \func{est}{\bm{x}}$.}
\[ \est{\theta} := a : \func{est}{\bm{x}} \mapsto a.\]
The \emph{error} of the estimate is defined as the distance of $\est{\theta}$ from $\theta$
\[ \func{err}{\est{\theta}} := \func{d}{\est{\theta},\theta}.\]
We will not be strict in distinguishing between  the estimator (the function) and the estimate (the output), using the two interchangeably.
\end{definition}
\begin{remark}
If the data is described by a probability density function, then \cref{qu:crlb} reduces to estimation of probability distributions. Given that $\func{p}{\bm{x};\theta}$ belongs to a parametrized family of probability distributions $\mathcal{F}_{\theta} = \{\func{p}{\bm{x};\theta} : \theta \in \Theta\}$ (parametrized by $\theta$). We can ask, how well can one determine the underlying probability distribution from a given number of samples drawn from the PDF? It turns out that this is (nearly) equivalent to \cref{qu:crlb}.
\end{remark}
At this point I think it is helpful to comment on a conceptual difficulty I have with probability theory and statistical estimation that is not discussed as much as I would like. Until now we have been talking in the past-tense, as if the measuring device has outputted an actual data-set (a collection of real numbers). Given the data-set $\bm{x}$, our task to return a unique estimate, a real number based on that fixed data-set. That is not how the CRLB is actually derived, and the scenario considered is slightly more complicated.

Before the measuring device has even outputted a data-set, we consider the entirety of results the device could \emph{potentially} output. The results lie in the space $\mathcal{X}$, where the probability to observe any result, a point in (or more generally a measurable subset of) $\mathcal{X}$ is given by the PDF $\func{p}{\bm{x};\theta}$. The CRLB works with this probability space, meaning that an estimate is itself a random variable, and the estimator is a function acting on a set, it maps a set of points in a probability space to a new set of points in a probability space. Therefore, we revise \cref{def:est} to define a probabilistic estimate.
\begin{definition}[Probabilistic estimate and estimator]
We define an \emph{estimator} as a function acting on the measurement PDF (the probability space of the data-set) which returns a probabilistic \emph{estimate} $\est{\theta}$ of the value of $\theta$, itself a probability space which we call the \emph{estimate PDF}. The estimator function $\func{est}{\cdot} : (\mathcal{X}, \bm{P}) \to (\est{\Theta}, \bm{P})$ takes the measurement PDF as input, and its output is the estimate PDF describing a random variable taking on real values
\[ \func{est}{\func{p}{\bm{x};\theta}} \mapsto \est{\theta} : \est{\theta} \text{ is a probability space in the reals}.\]
\end{definition}
We now need a new definition of the estimate error. Noting that we can take the expected value and the variance of an estimate\footnote{We do not however treat the parameter $\theta$ as a random variable, see \cref{ass:classical}.}, we use the following 

\begin{definition}[\hypertarget{Mean squared error}{Mean squared error}]
The (Euclidean) \emph{mean squared error} (MSE) of an estimate, with respect to $\theta$, is defined as
\be \func{MSE}{\est{\theta}} := \eval{\left(\est{\theta} - \theta \right)^2},\ee
which is the average squared difference of the estimate from the true value of $\theta$. Here, the expectation is taken with respect to $\func{p}{\bm{x};\theta}$, so in general the mean squared error depends on the value of $\theta$.
\end{definition}
Equivalently, the MSE can be expressed as the sum of the \hyperlink{estimate variance}{estimate variance} and squared bias
\begin{align*} \func{MSE}{\est{\theta}} &= \eval{\left(\est{\theta} - \theta \right)^2} = \eval{\left((\hat{\theta}\textcolor{gray}{-\eval{\hat{\theta}})+(\eval{\hat{\theta}}}-\theta)\right)^2}\\
& = \func{Var}{\hat{\theta}} + \left(\eval{\hat{\theta}} - \theta\right)^2 = \func{Var}{\hat{\theta}} + \left(\func{Bias}{\hat{\theta}}\right)^2,
\end{align*}
where again, both the estimator variance and bias, in general depend on the value of $\theta$.


While the first two assumptions allow us to build up a mathematical framework, which we can ultimately use to bound the error of any estimate of $\theta$, they are not necessary unless we want the bound to be saturated. For example, if the measurement PDF is many-to-one, meaning two different values of $\theta$ give the same measurement result, then the effect of this is to increase the estimation uncertainty of $\theta$. Likewise if we allow that $\theta$ itself is a random variable, then again we increase the estimation uncertainty, since we have added some intrinsic uncertainty to $\theta$ which is combined with our measurement uncertainty.

The next two assumptions do not have these properties. In particular, relaxing the following assumptions can allow for estimators that violate the CRLB.
\begin{assumption}[No additional information]\label{ass:prior}
The (posterior) estimate is obtained only using information provided by the data-set $\bm{x}$. Mathematically, if the data-set takes on elements in $\mathcal{X}$, then we assume the posterior $\func{est}{\bm{x}}$ is a function with a domain restricted to $\mathcal{X}$, and with no dependence on elements of any set outside $\mathcal{X}$. More fully, for a probabilistic estimate, we assume the estimator is a function only of the measurement PDF, i.e. has a domain restricted to $(\mathcal{X}, \bm{P})$.
\end{assumption}
While obvious; clearly if we obtain extra information on $\theta$ from an external source we can produce a better estimate of its value, it turns out that enforcing \cref{ass:prior} is critical and is often violated by quantum algorithms. Ensuring that no extra information is snuck into the analysis is in fact a difficult task.

\begin{assumption}[Regularity conditions]\label{ass:regularity}

We assume the measurement PDF satisfies the following conditions \cite{Kay1993, Nielsen2013}
\begin{itemize}
\item The support $\{ \bm{x} : \func{p}{\bm{x};\theta} > 0 \}$ is identical for all $\func{p}{\bm{x};\theta} \in \mathcal{F}_{\theta}$ (and thus does not depend on $\theta$). I.e. the domain of the measurement PDF for which the probability is non-zero does not depend on $\theta$.
\item The above condition generally ensures that we can swap the order of integration and differentiation if $\bm{x}$ is a continuous random variables. I.e. that $\int_{\mathcal{X}} \func{p}{\bm{x};\theta} \mathrm{d}\bm{x}$ can be differentiated under the integral sign with respect to $\theta$\footnote{For a discrete random variable, we can change the order of summation and differentiation.}. But just to be sure, and because it helps to realise that operation is allowed, I have included it as an extra assumption. Under this assumption one can show
\be \eval{\frac{\partial \func{log}{\func{p}{\bm{x};\theta}}}{\partial \theta}} = 0, \quad \quad \forall \theta \in \Theta, \ee
where the expectation is taken with respect to $\func{p}{\bm{x};\theta}$.
\item The gradient $\frac{\partial \func{p}{\bm{x};\theta}}{\partial \theta}$ exists. This ensures that the output of the measuring device is a measurable set (a probability space).
\end{itemize}
\end{assumption}
Under these assumptions, we can state the Cram\'er-Rao lower bound in terms of the estimate mean squared error.

\begin{genthm}{theorem}{Cram\'er-Rao lower bound}\label{th:CRLB}
Suppose \crefrange{ass:classical}{ass:regularity} hold, then the mean squared error of any estimate of $\theta$ must satisfy 
\be\label{eq:CRLB} \func{MSE}{\est{\theta}} \geq \frac{\left(\frac{\partial }{\partial \theta}\eval{\est{\theta}}\right)^2}{\eval{\left( \frac{\partial\func{log}{\func{p}{\bm{x};\theta}}}{\partial \theta}\right)^2}} \; \myeq{Ass.\,\ref{ass:regularity}}{=}\; \frac{\left(\frac{\partial }{\partial \theta}\eval{\est{\theta}}\right)^2}{-\eval{\frac{\partial^2 \func{log}{\func{p}{\bm{x};\theta}}}{\partial \theta^2}}}, \quad \forall \theta \in \Theta. \ee
where the expectation is taken with respect to $\func{p}{\bm{x};\theta}$.
\end{genthm}
\begin{proof}
See Appendix\,\ref{proof:CRLB}.
\end{proof}

\subsection{The Cram\'er-Rao lower bound and Measure theory}
Answering \cref{qu:crlb} requires that we come up with a way to ``measure" how good an estimate is and the Cram\'er-Rao lower bound takes the mean squared error of an estimate as a starting point for this measure. However, it is not yet a completely satisfying answer. As it currently stands, one way of finding a good estimate is to set, independent of the data-set, $\est{\theta} = \alpha$ for any $a \leq \alpha \leq b$, i.e. simply choose a number in $\Theta$ as the estimate. Since $\frac{\partial }{\partial \theta}\eval{\est{\theta}} = 0$, the CLRB for this estimate is zero, and we can see that $\func{MSE}{\est{\theta}} = 0$ when $\theta = \alpha$, and the bound is obtained.

Computationally, \cref{th:CRLB} bounds the performance of any algorithm in terms of its best case performance on any input. An algorithm which outputs a constant string is considered exceptionally good since it returns the solution for one input string\footnote{Assuming this is true, i.e. that there exists a solution given by the constant string.}. Again, this seems like an unsatisfactory measure for computational performance, and is a loop-hole that we would like our analysis to avoid.

Before introducing one way to resolve this issue, we are going to use the mathematical framework of measure theory to discuss the CRLB in more detail. Roughly, a measure on a set of points in a topological or metric space is the assignment of a non-negative number to the set that characterizes its ``size".

\begin{definition}[Measure -- informal\footnote{See the Appendix for a formal definition of a \hyperlink{Measure}{measure}.}]
A \emph{measure} on a set has the following properties:
\begin{enumerate}
\myitem{($\mu$0)}\label{item:mu0} The measure of any set is non-negative. I.e. there are no a negative sizes.
\myitem{($\mu$1)}\label{item:mu1} The empty set has zero measure, i.e. has zero size.
\myitem{($\mu$2)}\label{item:mu2} The measure of the union of two disjoint sets is equal to the sum of their individual measures. I.e. The size of two separate sets is simply the sum of the size of each set.
\myitem{($\mu$3)}\label{item:mu3} The measure of the union of any enumerable collection of disjoint sets is the sum of their individual measures. I.e. we can extend \ref{item:mu2} to a countable union of disjoint sets.
\end{enumerate}
\end{definition}

We would like to use these axioms \ref{item:mu0}\,--\,\ref{item:mu3} in assigning a measure to an estimate, where the measure reflects the amount of \emph{information} on $\theta$ that the estimate provides. I.e. how good an estimate is (its ``size") is depends on how much information it provides on $\theta$, rather than the mean squared error\footnote{The reason being that the MSE is not a measure. We will see that it is however the reciprocal of a measure -- the Fisher information measure.}. Intuitively, we should expect that information has the properties of a measure -- it is non-negative; no data corresponds to zero information; and if we are given two unrelated pieces of information we expect the total information received to be the sum of the two pieces of information.

Measure theory makes this assignment mathematically rigorous, avoiding non-obvious technical pitfalls\footnote{For instance some sets are not measurable.}, and giving a unique and unambiguous measure to any suitably well-defined set (up to a scaling or normalisation factor). I.e. once we assign a single non-zero measure to any non-empty set, then this defines the measure on all sets in the measure space (see for example Tao \cite{Tao2011} Exercise 1.2.23)\footnote{However, the measure is only unique for a given distance metric. If we can assign a different metric between points, then we have greater freedom in defining the measure. Tao assumes Euclidean distance in his definitions of measure for Ch.\,1.}. In short, measure theory says that, up to a scaling, there is only one way to quantify the amount of information provided by an estimate.

We will see that the denominator in \cref{eq:CRLB} is a measure (of the information provided on $\theta$), and because of its importance to many areas of mathematics and analysis, it has its own name.

\begin{definition}[Fisher information]
The \emph{Fisher information} $\func{I}{\func{p}{\bm{x};\theta}}$ on $\theta$ provided by a PDF $\func{p}{\bm{x};\theta}$ is defined as
\be \func{I}{\func{p}{\bm{x};\theta}} := \eval{\left(\frac{\partial \func{log}{\func{p}{\bm{x};\theta}}}{\partial \theta}\right)^2},\ee
with two qualifiers for edge cases. Namely
\[ \func{I}{\func{p}{\bm{x};\theta}} := 0,\]
when $\func{p}{\bm{x};\theta} = \varnothing$ (i.e. the null data-set) and when $\func{p}{\bm{x};\theta} = 0$ for all $\bm{x}$ in an interval (more generally metric segment) of $\mathcal{X}$\footnote{Alternatively we can remove points in $\mathcal{X}$ with zero probability from the analysis, avoiding the issue of defining $\left(\frac{\partial \func{log}{\func{p}{\bm{x};\theta}}}{\partial \theta}\right)^2 \func{p}{\bm{x};\theta}$ or $\left(\frac{\partial \func{p}{\bm{x};\theta}}{\partial \theta}\right)^2 \frac{1}{\func{p}{\bm{x};\theta}}$ when $\func{p}{\bm{x};\theta} = 0$. This removal does not effect the Fisher information or Cram\'er-Rao lower bound since these points have zero information.}.
\end{definition}
In the context of the \hyperlink{classical estimation problem}{Classical estimation problem}, we can equally refer to the Fisher information on $\theta$ provided by a series of measurements (that produce a data-set $\bm{x}$ with PDF $\func{p}{\bm{x};\theta}$), i.e. the Fisher information on $\theta$ provided by the measuring device.

\begin{proposition}[Fisher information is a measure with respect to $\theta$ on probability distributions]\label{prop:FI}
Let $\bm{x}$ denote the outcome of a countable sequence of measurements with a PDF given by $\func{p}{\bm{x};\theta}$\footnote{I.e. the random variable $\bm{x}$ has a PDF $\func{p}{\bm{x};\theta}$, such that $\func{p}{\bm{x};\theta}$ is an element (each forming a probability space) in family of probability distributions parametrized by $\theta$.}. Then the Fisher information on $\theta$, $\func{I}{\func{p}{\bm{x};\theta}}$, satisfies the \hyperlink{Measure}{measure axioms}.
\end{proposition}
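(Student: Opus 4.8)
The plan is to set up the correct dictionary between the informal measure axioms \ref{item:mu0}--\ref{item:mu3} and statements about Fisher information, and then verify each axiom. Here a ``set'' is a sub-collection of the countable sequence of measurements; two such sub-collections are ``disjoint'' when they share no measurement and, as is implicit in modelling each measurement by its own probability space, their outcomes are statistically independent; and the ``union'' of two disjoint sub-collections is the combined experiment, whose PDF is the product of the two PDFs. Under this dictionary \ref{item:mu0} becomes non-negativity of $\func{I}{\cdot}$, \ref{item:mu1} becomes $\func{I}{\varnothing}=0$, and \ref{item:mu2}--\ref{item:mu3} become finite and countable additivity of Fisher information over independent observations --- which is precisely the classical additivity property of Fisher information that I will establish.

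Axioms \ref{item:mu0} and \ref{item:mu1} are immediate. Since $\func{I}{\func{p}{\bm{x};\theta}}=\eval{(\partial_\theta\func{log}{\func{p}{\bm{x};\theta}})^2}$ is the expectation of a square it is $\geq 0$ wherever the expectation exists, and the two edge-case qualifiers in the definition assign it the value $0$ (never a negative number) in the remaining cases, so \ref{item:mu0} holds unconditionally. For \ref{item:mu1}, the null data-set is assigned $\func{I}{\varnothing}=0$ by definition; equivalently, a data-set whose PDF carries no dependence on $\theta$ has identically vanishing score $\partial_\theta\func{log}{\func{p}{\bm{x};\theta}}=0$ and hence zero Fisher information, matching the intuition that ``no data'' provides no information about $\theta$.

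The substance is \ref{item:mu2}. Let $\bm{x},\bm{y}$ be the outcomes of two disjoint (independent) sub-experiments with PDFs $\func{p}{\bm{x};\theta}$ and $\func{q}{\bm{y};\theta}$, so the union has PDF $\func{p}{\bm{x};\theta}\,\func{q}{\bm{y};\theta}$ and log-likelihood $\func{log}{\func{p}{\bm{x};\theta}}+\func{log}{\func{q}{\bm{y};\theta}}$. Differentiating in $\theta$, squaring, and taking the expectation yields the two diagonal terms $\func{I}{\func{p}{\bm{x};\theta}}+\func{I}{\func{q}{\bm{y};\theta}}$ together with a cross term $2\,\eval{(\partial_\theta\func{log}{\func{p}{\bm{x};\theta}})(\partial_\theta\func{log}{\func{q}{\bm{y};\theta}})}$. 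By independence the cross term factorises as $2\,\eval{\partial_\theta\func{log}{\func{p}{\bm{x};\theta}}}\,\eval{\partial_\theta\func{log}{\func{q}{\bm{y};\theta}}}$, and each factor vanishes by the zero-mean-score identity guaranteed by \cref{ass:regularity}; hence the Fisher information of the union equals the sum, which is \ref{item:mu2}. For \ref{item:mu3} one iterates this computation: for a countable family of mutually independent sub-experiments the joint score is the sum of the individual scores, all of which are zero-mean and pairwise uncorrelated, so the expectation of the square of the sum equals the sum of the expectations of the squares. The interchange of expectation and infinite sum is justified by monotone convergence applied to the non-decreasing sequence of partial Fisher informations (non-decreasing by \ref{item:mu0}), with the limit permitted to be $+\infty$; this gives countable additivity.

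The main obstacle is conceptual rather than computational: one must argue that independence is the right reading of ``disjoint'' and that the union operation corresponds to forming the joint (product) distribution rather than, for instance, a mixture --- for which Fisher information is \emph{not} additive, so the analogy would fail. Once that modelling choice is fixed, the only genuine analytic input is the vanishing of the score's mean from \cref{ass:regularity}, which is exactly what annihilates the cross term in \ref{item:mu2}, together with the routine monotone-convergence argument needed to pass from finite to countable unions.
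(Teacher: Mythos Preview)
Your proof is correct and follows the same approach as the paper: identify ``disjoint sets'' with independent sub-experiments, ``union'' with the product PDF, and deduce additivity of Fisher information from the decomposition of the score of a product into a sum of scores. Your version is in fact more complete than the paper's, which simply asserts the additivity relation $\func{I}{\func{p}{\bm{x};\theta}}=\sum_{i=1}^R \func{I}{\func{p}{x_i;\theta}}$ without explicitly killing the cross term via the zero-mean-score identity and without addressing the passage from finite $R$ to the countable case.
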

\begin{proof} See Appendix\,\ref{app:measure} for a \hyperlink{Proof}{proof}.
\end{proof}

\begin{definition}[Information and \hypertarget{uncertainty}{uncertainty}]
We will often find it useful to refer to the square root of the Fisher information. Therefore, we define the \emph{information} on $\theta$, $\func{\mathcal{I}}{\func{p}{\bm{x};\theta}}$ of a measurement with PDF $\func{p}{\bm{x};\theta}$ as
\[ \func{\mathcal{I}}{\func{p}{\bm{x};\theta}} := \left|\sqrt{\func{I}{\func{p}{\bm{x};\theta}}}\right|. \]
Likewise, we define the \emph{uncertainty} of an estimate as the square root of the estimate mean squared error
\[ \Delta \est{\theta} := \left|\sqrt{\func{MSE}{\est{\theta}}}\right|. \]
\end{definition}

\begin{remark}[Information in an estimate or the data?]\label{re:inf}
The Fisher information moves our attention from the estimate, to instead characterising the information contained in the probabilistic data-set (in a PDF). The estimate does not, in and of itself provide information on $\theta$, because the function $\func{est}{\cdot}$ does not depend on $\theta$. Rather, the estimate utilizes information provided by the data-set. This shift of focus to the information provided by a PDF allows us to obtain a better understanding of what the CRLB signifies.

The CRLB states that the information on $\theta$ provided by a data-set is uniquely defined, and the inverse of this quantity (nearly) uniquely limits the uncertainty of any estimate of $\theta$. We say nearly, because the Fisher information can be scaled by any non-negative number whilst still being a measure, and \cref{eq:CRLB} preserves this freedom. From this perspective the numerator of \cref{eq:CRLB} can be viewed as a scaling factor, one equal to the relative length of the estimate space and the parameter space\footnote{This relation only holds if the PDF is one-to-one and therefore requires \cref{ass:pdf}, otherwise the gradient can overestimate $\est{\left\Vert \Theta\right\Vert}$ and we have $\frac{\partial }{\partial \theta}\eval{\est{\theta}} \simleq \frac{\est{\left\Vert \Theta\right\Vert}}{\left\Vert \Theta \right\Vert}$.}
\be\label{eq:scale} \frac{\partial }{\partial \theta}\eval{\est{\theta}} \simeq \frac{\est{\left\Vert \Theta\right\Vert}}{\left\Vert \Theta \right\Vert}, \ee
where $\left\Vert \Theta \right\Vert$, $\est{\left\Vert \Theta\right\Vert}$ denote the length of the parameter and estimate space respectively, i.e. $\left\Vert \Theta \right\Vert = b - a$ for Euclidean distance. For example, if $\est{\theta} = \alpha$ for all $\theta$, then the estimate space is a single point with zero length and equivalently $\frac{\partial }{\partial \theta}\eval{\est{\theta}} = 0$. Whereas if $\est{\theta} = \theta$ for all $\theta$, then $\est{\theta}$ takes on every value of $\theta$, the spaces have identical lengths, and $\frac{\partial }{\partial \theta}\eval{\est{\theta}} = 1$.

\end{remark}

\Cref{eq:scale} motivates a specific scaling (choice of normalisation constant) for any estimate. We would like the estimate space to have the same length as the parameter space, to ensure that the estimate has the same ``units" as the parameter. Computationally this would also ensure that the algorithm returns a solution to every problem instance. In fact, the CRLB is usually presented in a form which places exactly this restriction on the estimate -- that it is unbiased.
\begin{definition}[Unbiased estimate]
An estimate of $\theta$ is defined as \emph{unbiased} if and only if it satisfies
\[ \eval{\est{\theta}} = \theta, \quad \forall \theta \in \Theta. \]
An unbiased estimate has no systematic error or bias, it is (on average) equal to the true value of $\theta$, the key qualifier being that this statement holds no matter the value of $\theta$, i.e. for all $\theta \in \Theta$. Given enough resources, we expect that an unbiased estimate should converge to $\theta$.
\end{definition}

\begin{genthm}{theorem}{Cram\'er-Rao lower bound for an unbiased estimate}
Suppose \crefrange{ass:classical}{ass:regularity} hold and that the estimate $\est{\theta}$ is unbiased. Then the estimate has a mean squared error (equiv. variance) of 
\be\label{eq:CRLBunbiased} \func{MSE}{\est{\theta}} = \func{Var}{\est{\theta}} \geq \frac{1}{-\eval{\frac{\partial^2 \func{log}{\func{p}{\bm{x};\theta}}}{\partial \theta^2}}} = \frac{1}{\eval{\left(\frac{\partial \func{log}{\func{p}{\bm{x};\theta}}}{\partial \theta}\right)^2}}, \quad \forall \theta \in \Theta. \ee
\end{genthm}
\begin{proof}
An unbiased estimator satisfies $\eval{\est{\theta}} = \theta$, so we have $\frac{\partial }{\partial \theta}\eval{\est{\theta}} = 1$. Substituting into \cref{eq:CRLB}, then \cref{eq:CRLBunbiased} follows immediately.
\end{proof}


\begin{remark}[Relevance and application to computational estimators]
The unbiased assumption may seem restrictive, but it is important to note that computations are in general unbiased. If a deterministic computation solves a problem on all inputs, then by definition it is unbiased. Furthermore, if a quantum computation finds a solution with high probability and assuming the parameter space is bounded, then the amount of bias is limited, meaning that efficient quantum computations are exponentially close to being unbiased.
\end{remark}

\begin{remark}[Bayesian estimation]
Bayesian estimation takes a different approach to resolving the scaling issue for a measure. Rather than restrict the estimator to be unbiased, Bayesian estimation focusses on evaluating the \emph{expected} mean squared error of the estimate. Now we perform two averages, first an expectation over the measurement PDF $\func{p}{\bm{x};\theta}$ and an additional expectation taken over the probability that $\theta$ takes on each value in $\Theta$. One issue with this approach is in justifying the PDF used to describe $\theta$ as a random variable. Often, a uniform probability distribution is assumed as this distribution has maximum entropy, but such an assignment does not set a lower bound, since taking any other distribution allows for better estimation.
\end{remark}


\subsubsection*{Summary}
The CRLB restricts the amount of information a measurement with a PDF described by $\func{p}{\bm{x};\theta}$ can provide on the value of an unknown parameter $\theta$. This information in turn bounds the uncertainty of any (unbiased) estimate of $\theta$. The CRLB says that if a measurement result has (on average) higher dependence on $\theta$, then it provides more information on $\theta$, whereas measurements that have weak dependence on $\theta$ provide little information on the value. In particular, the CRLB directs our focus when searching for strategies to obtain more information on $\theta$. The best strategy is to produce measurements with results that depend greatly on $\theta$. This interpretation is going to be a cornerstone of the quantum mechanical version of the CRLB.

The CRLB is both intuitive to understand and simple to state. Firstly, the Fisher information of a PDF characterises how much information a probabilistic measurement provides. Secondly, the information contained in a measurement bounds the uncertainty of any unbiased estimate. Although it is just a bound and may not be realizable, if we are presented with an analysis that claims to extract more information from a measurement with a distribution $\func{p}{\bm{x};\theta}$ (as evidenced by an unbiased estimator with lower uncertainty), then we can confidently say either; one of the assumptions has been violated or the analysis is incorrect\footnote{There is one area of quantum metrology which claims to do this, in clear violation of the CRLB -- quantum squeezing.}.\\

Until now we have just considered statistical analyses of classical data-sets -- those that are described by a collection of real numbers. We now move on to address the quantum mechanical version of the CRLB.

\subsection{Quantum mechanical statistical parameter estimation}
Note that the classical CRLB does not impose any physical restrictions on the output of a measuring device, in particular how much the device output can respond to a signal. If we want to characterise the precision of an arbitrary measuring device, then the CRLB only addresses half of the problem. Sure, once the measurement data is produced, the CRLB tells us how much information has been provided, but it says nothing about the form of the dataset produced by a device, i.e. how much the probability distribution can depend on $\theta$ in the first place\footnote{Actually, even before any data has been produced the CRLB tells us how much information \emph{could} be provided, so long as the response of the measuring device has been characterized.}. What is to stop a sensor (or computer) from achieving arbitrarily high precision by sampling from a distribution where $\frac{\partial \func{log}{\func{p}{\bm{x};\theta}}}{\partial \theta}$ goes to infinity?

Put another way, although the CRLB allows us to characterise the information provided by a measuring device, it does not specifically limit how much information the measuring device can provide. Given a measurement PDF parametrized by a signal, we have no way of knowing whether such a measuring device can be physically realised and whether we are applying the CRLB to a realistic system. It seems like there should be a physical law that addresses this issue.\\

Indeed there is one.\\

For any physical device, quantum mechanics places a constraint on $\left|\frac{\partial \func{log}{\func{p}{\bm{x};\theta}}}{\partial \theta}\right|$ and therefore the information provided by the device; leading to the quantum Cram\'er-Rao lower bound (qCRLB). The qCRLB allows us to consider a question much broader than that posed by the classical CRLB; and furthermore to answer it in full. Not only do we consider the information provided by a given data-set, but we also use quantum mechanics to determine how much a physical measuring device can respond to a signal; in effect restricting the form of the measurement PDF and thereby limiting the amount of information any device can provide. In particular, the question we consider is... 
\begin{question}
Given a quantum state $\ket{\psi_0}$ (possibly represented by a density matrix $\rho_0$), that interacts with a Hamiltonian $\Op{H}(\theta)$ parametrized by a fixed, classical, deterministic parameter $\theta$, and undergoes the transformation $\ket{\psi_0} \to \ket{\psi (\theta)}$. What is the minimum $\func{MSE}{\est{\theta}}$ for any estimate of $\theta$, using any measurement allowable by quantum mechanics (and assuming the state evolves under $\Op{H}(\theta)$ according to the Schr\"odinger equation). I.e. what bounds do the postulates of QM place on the $\func{MSE}{\est{\theta}}$ obtainable from a measurement of $\ket{\psi(\theta)}$?
\end{question}
The answer to this question is obtained by considering the family of all possible quantum states and quantum measurements. The result is a bound on $\left|\frac{\partial \func{log}{\func{p}{\bm{x};\theta}}}{\partial \theta}\right|$ for any measurement of any quantum state. We can then apply the CRLB to this classical data-set (arising from the best possible measurement on the optimal quantum state), and bound the $\func{MSE}{\est{\theta}}$ from such a data-set.\\

The recipe we use can be summarized as follows:
\begin{enumerate}
\item Start with an initial quantum state $\ket{\psi_0}$. We assume that $\ket{\psi_0}$ does not depend on $\theta$, i.e. we did not already sneak some information on $\theta$ into this initial state at time $t_0$. Taken together with \cref{ass:prior} that we have no other information of the value of $\theta$, this means that the only information on $\theta$ that we can physically obtain is through a measurement of the state $\ket{\psi_0} \to \ket{\psi(\theta,t)}$, and only after the initial state has evolved in response to some Hamiltonian.
\item The estimation problem, parametrized by $\theta$, is defined by a Hamiltonian $\Op{H}(\theta, t_0,t)$ (equiv. Unitary $\Op{U}(\theta,t_0,t)$). By \cref{ass:pdf}, the form of the Hamiltonian is perfectly known thus allowing the PDF to be characterised, albeit with an unknown value of $\theta$.
\item Using the postulates of QM: a) The Schr\"odinger equation -- which defines the state evolution under $\Op{H}(\theta, t_0,t)$, and b) The Born rule -- which defines the probabilities to measure any real valued data-set $\bm{x}$, corresponding to a collection of Hermitian measurement operators $\{\Op{X}\}$, we can place a bound on $\left| \frac{\partial}{\partial \theta}\func{log}{\func{p}{\bm{x};\theta,t}}\right|$ at time $t$.
\item Using a statistical estimation theorem on classical data-sets (the CRLB), we can bound the mean squared error of any unbiased estimate $\est{\theta}$, obtainable from any measurement described by a probability distribution $\func{p}{\bm{x};\theta,t}$.
\end{enumerate}

\begin{remark}
It is worth looking ahead here to note that in Point (3), $\left|\frac{\partial}{\partial \theta}\func{log}{\func{p}{\bm{x};\theta,t}}\right|$ is determined by how much the state $\ket{\psi(\theta, t)}$ responds to a change of $\theta$ in the Hamiltonian $\Op{H}(\theta, t_0,t)$. And how quickly a state can change in time is directly related the energy eigenvalues of the state. In fact, for a given Hamiltonian, $\left| \frac{\partial}{\partial \theta}\func{log}{\func{p}{\bm{x};\theta,t}}\right|$ is maximised when $\ket{\psi(\theta,t)}$ remains in an equal superposition of eigenstates with the greatest difference in eigenvalues (of $\partial \Op{H}(\theta, t_0,t)/\partial \theta$) for the entire evolution time. The Heisenberg limit in quantum metrology \cref{eq:HL}, is a direct consequence of this fact; the factor of $n$ is due to the $n$-fold greater energy difference as compared to a single qubit. More to the point, we do not even need the CRLB to rule out efficient quantum computation. A simple energy argument can be used to rule out the possibility of a state changing exponentially quickly in time, whereas efficient quantum computation requires the quantum state to follow a path that increases exponentially in time. Making this argument rigorous, however is more technical. 
\end{remark}

Before formulating a quantum mechanical version of the estimation problem, some definitions need to be introduced. As we will see, the following two definitions are critical.

\begin{definition}[A single quantum state vector]
A \emph{single (quantum) state vector} is defined as any non-separable unit vector in a complex Hilbert space $\mathscr{H}$. I.e. any normalised vector that cannot be decomposed into the tensor product of more than one vector in Hilbert spaces of lower dimension.

\emph{Clarification: }Empty state vectors are neglected in this definition. Just as the trivial decomposition of a prime number by a factor of 1 does not make the number composite, the tensor product of a single state vector with a trivial, unit dimensional Hilbert space does not produce a separable state vector. E.g. the tensor product of a single particle state vector with the empty/vacuum state is a single state vector\footnote{In optical interferometry, quantum states are often represented in an occupation number basis, and such a state would be written: $\ket{1}\otimes \ket{0}$.}.
\end{definition}
\begin{remark}
As hinted, there is a clear analogy between a single quantum state vector and prime numbers -- they both cannot be factored into smaller units. This suggests that we treat non-separable vectors as the atomic or indivisible units of Hilbert space and any physical system. In fact, this definition is central to our argument limiting the power of quantum computation. As entanglement is the single defining feature of a (pure state) quantum computation, we have already defined a metric in which quantum computers perform poorly -- the counting measure for the number of non-separable state vectors. If we can bound computational performance purely in terms of the number of state vectors, then we are well on the way. Our plan of attack is to decompose any physical device or computer into its constituent parts -- non-separable state vectors -- and consider the information provided by each non-separable state vector.
\end{remark}

\begin{definition}[A quantum measurement]
A \emph{quantum measurement} on a single quantum state vector is defined as a probabilistic function from a complex Hilbert space to a real probability space $\func{meas}{\cdot} : \mathscr{H} \to (\mathbb{R}, p)$, satisfying the following conditions (here assuming the measurement results are discrete).

A given measurement is described by a collection of Hermitian operators $\{\Op{X}_x\}$ in $\mathscr{H}$ that satisfy the completeness relation
\be \sum_{x\in \mathcal{X}} \Op{X}_x^{\dagger}\Op{X}_x = I,
\label{eq:residentity}\ee
with measurement outcomes $x \in \mathcal{X}$\footnote{Often assumed that $x$ is given by the eigenvalues of $\Op{X}_x$, and we need that $\mathcal{X}$ forms a $\sigma$-algebra.}. The measurement outcomes for a quantum measurement on a state $\ket{\psi}$ are observed with a probability given by the Born rule
\be \funcwrt{p}{x}{\ket{\psi}} := \bra{\psi}\Op{X}^{\dagger}_x \Op{X}_x \ket{\psi}. 
\label{eq:Born}\ee
\end{definition}

\begin{definition}[\hypertarget{Quantum estimation problem}{Quantum estimation problem}]
Consider a collection of single quantum states $\{\ket{\psi_0}\}$ which are used to measure a signal $\theta$ parametrized by a Hamiltonian $\Op{H}(\theta, t_0, t)$. We define a \emph{quantum estimation problem} as the task of estimating the unknown value of $\theta$, using \emph{only} the information provided by a series of measurements on $\{\ket{\psi (\theta, t)}\} = \Op{U}(\theta, t_0, t)\{\ket{\psi_0}\}$. We further allow that the quantum state evolution can be influenced by a control Hamiltonian $\Op{H}_c(t_0, t)$, that does not depend on $\theta$.
\end{definition}

\begin{remark}[Assumptions for quantum estimation]\label{ass:quantum}
The same assumptions used in the classical estimation problem (\cref{ass:classical} -- \cref{ass:regularity}) apply to the quantum estimation problem. Namely that $\theta$ is a scalar parameter of fixed definite value, that we have no additional information on $\theta$ that is not provided by the measurement, (possibly that the estimate $\est{\theta}$ is unbiased) and the measurement PDF satisfies the regularity conditions.
\end{remark}
Due to how the quantum estimation problem is formulated, we need to make \cref{ass:prior} more stringent, we further have to prevent one using an infinite amount of energy.
\begin{assumption}[Initial state]\label{ass:init_state}
We assume that at time $t_0$, the collection of single quantum states $\{\ket{\psi_0}\}$ does not depend on $\theta$, and therefore there is no information on $\theta$ already baked into the initial states. Mathematically, this assumption is expressed by the condition:
\[ \left\Vert \frac{\partial\ket{\psi_0}}{\partial \theta}\right\Vert = 0, \quad \forall \ket{\psi_0} \in \{\ket{\psi_0}\}.\]
\end{assumption}

\begin{assumption}[Bounded total energy]\label{ass:total_energy}
We assume for all time $t$, the total energy available for quantum evolution is finite and bounded by a constant
\[ \norm{\Op{H}(\theta,t_0,t)+\Op{H}_c(t_0,t)}\leq E_0,  \text{ for } E_0 \in \mathbb{R}_{\geq 0}.\]
\end{assumption}

Unfortunately, as the \hyperlink{Quantum estimation problem}{Quantum estimation problem} is currently formulated, it is still difficult to derive a general lower bound. Therefore we are going to add some caveats to make the analysis tractable. The two caveats needed for a rigorous formulation of the qCRLB are: 1) we restrict analysis to a single measurement, and 2) we restrict the measurement to that of a single quantum state. It turns out that these caveats are critical to obtaining a rigorous bound as they stop us from obtaining information on $\theta$ during $t$ and using that information to improve subsequent measurements. In short, \cref{ass:prior} implies that we cannot influence the evolution of $\ket{\psi_0}$ by taking advantage of any additional information on $\theta$, if however we can perform measurements at intermediate times, then this assumption no longer holds. To apply \cref{ass:prior} we need to ensure that the only information on $\theta$ we obtain is at the end of the experiment, after all evolution and from one single measurement. These restrictions allow us to show that quantum computers cannot efficiently solve an entire class of (estimation) problems.

\begin{assumption}[Single measurement on a single state vector]\label{ass:single}
We consider only the information provided by a single measurement on a single quantum state vector. In this case, the measurement result is a single number $x$ (not a vector) with PDF given by $\funcwrt{p}{x}{\ket{\psi}}$. This assumption is key to being able to restrict the performance of quantum computers as it prevents additional information on $\theta$ (obtained at an intermediate time) being used to improve the measurement result.
\end{assumption}

This additional restriction on the quantum estimation problem, which is critical to our analysis, ends up making it harder. 
\begin{definition}[Hard quantum estimation problem]
We define a \emph{hard} quantum estimation problem as a quantum estimation problem where we restrict to a single measurement on a single quantum state vector. I.e. one in which we enforce \cref{ass:single}.
\end{definition}

The restriction to a single state vector is important here, since with a collection of quantum states, we could measure some of the state vectors and obtain information on $\theta$, and then use this information to improve our measurements and control of the other state vectors.

\begin{remark}[A new computational model]
We can consider these assumptions as defining a new computational model. In this computational model, the computation ends with the computer in a single state vector and only a single measurement is performed on this state vector. At this point the computation finishes.
\end{remark}

We can now define the quantum Fisher information of a single quantum state vector and relate it to the Fisher information of a probability distribution.
\begin{definition}[Quantum Fisher information]
To any single quantum state vector $\ket{\psi}$, we can assign a non-negative real number called the \emph{quantum Fisher information} (QFI) on $\theta$ of the state vector. The QFI is defined as \cite{Braunstein1996}
\be \func{QFI}{\ket{\psi};\theta} := 4\left[\left(\frac{\partial\bra{\psi}}{\partial \theta}\right)\left(\frac{\partial\ket{\psi}}{\partial \theta}\right) - \left|\bra{\psi}\left(\frac{\partial\ket{\psi}}{\partial \theta}\right)\right|^2\right].
\label{eq:QFI} \ee
\end{definition}
The above is a slightly simpler and more explicit form of the QFI originally derived by Holevo which he wrote in terms of a symmetric logarithmic operator on mixed states \cite{Holevo1982} (see also Helstrom \cite{Helstrom1967}), it is related to the distance metric on quantum state vectors defined by Wootters \cite{Wootters1981}. Using the following relation
\[ \sum_x \frac{1}{\func{p}{x;\theta}}\left(\frac{\partial \func{p}{x;\theta}}{\partial \theta}\right)^2 = \sum_x \frac{1}{\func{p}{x;\theta}}\left(\frac{\func{p}{x;\theta} \partial \func{log}{\func{p}{x;\theta}}}{\partial \theta}\right)^2 = \eval{\left(\frac{\partial \func{log}{\func{p}{x;\theta}}}{\partial \theta}\right)^2}. \]
one can show that under the Born rule, no single quantum measurement of $\ket{\psi}$ can have a PDF with Fisher information greater than the QFI. I.e. denoting $\mathcal{M}$ as the class of allowable quantum measurements -- collections of Hermitian operators satisfying \cref{eq:residentity} and \cref{eq:Born} -- we have the following inequality \cite{Helstrom1967, Holevo1982, Braunstein1994, Braunstein1996}
\be \eval{\left(\frac{\partial \func{log}{\funcwrt{p}{x;\theta}{\ket{\psi}}}}{\partial \theta}\right)^2} \leq \func{QFI}{\ket{\psi};\theta}, \quad \forall \{\Op{X}_x\}\in \mathcal{M}.\ee
Equivalently
\be \funclims{sup}{\{\Op{X}_x\}\in \mathcal{M}}{\func{I}{\funcwrt{p}{x;\theta}{\ket{\psi}}}} \leq \func{QFI}{\ket{\psi};\theta}. \ee

We can summarize the above results in a simple expression. Since the second term in \cref{eq:QFI} is non-negative, the Fisher information (obtainable from any single measurement) of a single quantum state is bounded by how much the state responds to the signal
\be \func{I}{\funcwrt{p}{x;\theta}{\ket{\psi}}} \leq 4\left\Vert \frac{\partial \ket{\psi}}{\partial \theta}\right\Vert. \ee
The state response bounds the dependence of the measurement PDF on $\theta$ (for any measurement), which in turn sets a precision limit on any (unbiased) estimator of $\theta$. States that do not respond to a change in $\theta$ provide less information than states with a higher response. The quantum CRLB (for unbiased estimators) follows immediately.
\begin{genthm}{theorem}{Quantum Cram\'er-Rao lower bound}
Suppose the above assumptions hold, then the mean squared error (equiv. variance) of any unbiased estimator obtained from a single measurement of a single quantum state vector is bounded by
\be \func{Var}{\est{\theta}} \geq \frac{1}{\func{QFI}{\ket{\psi};\theta}} \geq \frac{1}{4\left(\frac{\partial\bra{\psi}}{\partial \theta}\right)\left(\frac{\partial\ket{\psi}}{\partial \theta}\right)}, \quad \forall \theta \in \Theta. \ee
\end{genthm}

\subsubsection*{Summary}
In short, we have shown how to bound the amount information one can extract from a single quantum state vector using just a \emph{single} measurement. The information on $\theta$, contained in a single quantum state vector is bounded by how much the state vector responds to the signal, i.e. $\left\Vert\partial\ket{\psi(\theta,t)}/\partial \theta\right\Vert$. By analysing parameters in explicit Hamiltonians, one can show that this leads to a general uncertainty bound as given in \cref{eq:HL}, see e.g. \cite{Giovannetti2004, Giovannetti2006, Zwierz2012, Pang2017, Degen2017, Pezze2018, Gorecki2020}.


\subsection{Issues with the CRLB}
It is worth remarking on two issues with the CRLB that I have never seen discussed, and if we are aiming at mathematical rigour are critical to address.

Firstly, the CRLB only addresses information provided by the data-set, however, we are also given information in the problem formulation when we are told the domain of $\theta$. Specifically, the set $\Theta$ gives us information on the value that $\theta$ can take. With no data, we can find an estimator (but not an unbiased estimator) with $\func{MSE}{\est{\theta}} \leq \left(\left\Vert \Theta \right\Vert/2\right)^2$ by letting $\est{\theta}$ be the center of the interval. The CRLB only addresses the information provided in the measurement PDF, but we are given further information, we are told the parameter space $\Theta$. Both of these sets are provided to us, and both provide information on the value of $\theta$.

To ensure the CRLB holds, \cref{ass:prior} prevents us from using this information on $\Theta$ in deriving an estimate, however it is very a restrictive assumption. If we can combine information measures on both of these sets, then we can derive a more general bound on the MSE of any estimate, and one which fully takes into account all the information we have access to. In fact, in coming up with the estimate $\est{\theta} = \alpha$ in \cref{re:inf}, we information on the parameter space to generate the estimate, and in fact violated \cref{ass:prior}. This is what makes the assumption so difficult to enforce, we must forget or throw away information provided to us in order to satisfy \cref{ass:prior}.

The second issue is that the CRLB is only valid with respect to Euclidean distance, whereas many metrics that we care about are not Euclidean. In particular, the distance between state vectors in Hilbert space is not Euclidean but is described by a Riemannian metric. The Fisher information is however a valid measure on Riemannian metrics. In computer science, many non-Euclidean distance metrics are commonly used, therefore we would like to derive an expression that bounds the mean squared error of an estimate in general metric spaces.



\subsubsection*{Conclusion}
We have derived a contradiction between the qCRLB and the outcome predicted for a single measurement on a quantum computer in a single entangled state vector. The contradiction can be summarized as follows. The qCRLB says that the information on a parameter $\theta$ contained in a single quantum state and observed in the measurement PDF cannot increase exponentially in time. Whereas, to perform efficient computation, the qCRLB says that the information (on some parameter) in a measurement PDF \emph{must} increase exponentially in time. 
Using measure theory, we can resolve this logical contradiction. There is no valid way to assign an information measure to a state in quantum mechanics which increases exponentially in time (unless we make the metric distance exponential or increase the energy exponentially). Thus, in a single run, a quantum computer in an entangled state vector cannot efficiently solve any computation problem that can be recast in terms of parameter estimation. We expect that this constitutes nearly the entire class of verifiable computational problems.

\bibliographystyle{naturemag}
\bibliography{references2025.bib}

\section{Appendix}
Explicit form of Fisher information for continuous and discrete random variables.
\[ \begin{cases}
\sum_{\bm{x} \in \mathcal{X}} \left(\frac{\partial \func{log}{\func{p}{\bm{x};\theta}}}{\partial \theta}\right)^2 \func{p}{\bm{x};\theta} = \sum_{\bm{x} \in \mathcal{X}} \left(\frac{\partial \func{p}{\bm{x};\theta}}{\partial \theta}\right)^2 \frac{1}{\func{p}{\bm{x};\theta}} \quad \text{if }\bm{x} \text{ is discrete},\\
\int_{\mathcal{X}} \left(\frac{\partial \func{log}{\func{p}{\bm{x};\theta}}}{\partial \theta}\right)^2 \func{p}{\bm{x};\theta} \mathrm{d}\bm{x} = \int_{\mathcal{X}} \left(\frac{\partial \func{p}{\bm{x};\theta}}{\partial \theta}\right)^2 \frac{1}{\func{p}{\bm{x};\theta}} \mathrm{d}\bm{x} \quad \text{if }\bm{x} \text{ is continuous}.
\end{cases} \]

\begin{definition}[\hypertarget{estimate variance}{Estimate variance}]
The (Euclidean) \emph{variance} of an estimate is defined as
\be \func{Var}{\est{\theta}} := \eval{\left(\est{\theta} - \eval{\est{\theta}}\right)^2},\ee
which is the average squared distance of the estimate from its expected value using Euclidean distance.
\end{definition}
\begin{definition}[Expected value of an estimate]
Expressed in terms of the estimator function $\func{est}{\bm{x}}$, the \emph{expected value} $\eval{\est{\theta}}$ of an estimate is defined as
\[ \eval{\est{\theta}} := \begin{cases}
\sum_{\bm{x} \in \mathcal{X}} \func{est}{\bm{x}}\func{p}{\bm{x};\theta} \quad \quad \text{if }\bm{x} \text{ is discrete},\\
\int_{\mathcal{X}} \func{est}{\bm{x}}\func{p}{\bm{x};\theta} \mathrm{d}\bm{x} \quad \quad \text{if }\bm{x} \text{  is continuous}.
\end{cases} \]
\end{definition}

\subsubsection{Proof of the Cram\'er-Rao lower bound}\label{proof:CRLB}
\begin{proof}[Proof of \cref{th:CRLB}] For the case that $\bm{x}$ is a continuous random variable (following the textbook of Kay \cite{Kay1993}). We use the following two identities
\be \label{eq:id2} \frac{\partial \func{p}{\bm{x};\theta}}{\partial \theta} = \func{p}{\bm{x};\theta} \frac{\partial \func{log}{\func{p}{\bm{x};\theta}}}{\partial \theta}.\ee
and 
\be \label{eq:id1} \eval{\frac{\partial \func{log}{\func{p}{\bm{x};\theta}}}{\partial \theta}}\cdot\theta = 0,\ee

Deriving the expected value of the estimate with respect to $\theta$, we obtain
\begin{align*}
&\frac{\partial }{\partial \theta}\int_{\mathcal{X}} \func{p}{\bm{x};\theta} \est{\theta} \; \mathrm{d}\bm{x} \quad \myeq{Ass.\,\ref{ass:regularity}}{=}\quad \int_{\mathcal{X}} \frac{\partial \func{p}{\bm{x};\theta}}{\partial \theta} \est{\theta} \; \mathrm{d}\bm{x} +\underbrace{ \int_{\mathcal{X}} \func{p}{\bm{x};\theta} \frac{\partial \est{\theta}}{\partial \theta} \; \mathrm{d}\bm{x}}_{=\,0\;\; Ass.\,\ref{ass:prior}} \quad \myeq{\cref{eq:id2}}{=}\quad \int_{\mathcal{X}} \func{p}{\bm{x};\theta} \frac{\partial \func{log}{\func{p}{\bm{x};\theta}}}{\partial \theta}\est{\theta} \;\mathrm{d}\bm{x}\\
&\;\quad\myeq{\cref{eq:id1}}{\implies}\quad \int_{\mathcal{X}} \func{p}{\bm{x};\theta} \frac{\partial \func{log}{\func{p}{\bm{x};\theta}}}{\partial \theta}\left(\est{\theta}-\theta\right) \;\mathrm{d}\bm{x} = \frac{\partial }{\partial \theta}\eval{\est{\theta}}.
\end{align*}

\Cref{eq:CRLB} follows immediately from application of the Cauchy-Schwarz inequality\footnote{Kay notes that it holds with equality if and only if $\func{g}{\bm{x}} = c \func{h}{\bm{x}}$ for $c$ some constant not dependent on $\bm{x}$. The functions $\func{g}{\cdot}$ and $\func{h}{\cdot}$ are arbitrary functions, while $\func{w}{\bm{x}}\geq 0$ for all $\bm{x}$.}
\[ \left(\int_{\mathcal{X}} \func{w}{\bm{x}}\func{g}{\bm{x}}\func{h}{\bm{x}} \; \mathrm{d}\bm{x} \right)^2 \leq \int_{\mathcal{X}} \func{w}{\bm{x}}\left(\func{g}{\bm{x}}\right)^2\; \mathrm{d}\bm{x} \cdot \int_{\mathcal{X}} \func{w}{\bm{x}}\left(\func{h}{\bm{x}}\right)^2\; \mathrm{d}\bm{x},\]
with $w(\bm{x}) = \func{p}{\bm{x};\theta}$, $g(\bm{x}) = \est{\theta}-\theta$, $h(\bm{x}) = \frac{\partial\func{log}{\func{p}{\bm{x};\theta}}}{\partial \theta}$.
\end{proof}
For the case that $\bm{x}$ is a discrete random variable, the proof follows analogously (see \cite{Cramer1946}).



\subsubsection{Measure theory}\label{app:measure}
\begin{definition}[\hypertarget{Measure}{Measure}]
\emph{Assigning a set to a non-negative number, that obeys additivity}.\\
Let $U$ be a set and $\mathcal{U}$ be a family of subsets of $U$ such that $\mathcal{U}$ forms a $\sigma$-algebra. A \emph{measure} on $\mathcal{U}$, $\func{meas}{\cdot}$, is a mapping $\mathcal{U} \to \extreal$, that assigns to each subset of $U$, (i.e. $S \in \mathcal{U}$), one and only one non-negative number. This makes $\func{meas}{\cdot}$ a function (on sets). To be a measure, the function, $\func{meas}{\cdot} : \mathcal{U} \to \extreal$ must satisfy the following axioms.
\begin{enumerate}
\item (Non-negativity) For all $S \in \mathcal{U}$, $\func{meas}{S} \geq 0$.
\item(Empty set) $\func{meas}{\varnothing} = 0$.
\item (Countable additivity) For any countable collection of disjoint sets $S_1, S_2, \dotsc \in \mathcal{U}$, then 
\[\func{meas}{\bigcup^{\infty}_{n = 1} S_n} = \sum^{\infty}_{n = 1} \func{meas}{S_n}.\]
\end{enumerate}
\end{definition}

\begin{proof}[\hypertarget{Proof}{Proof} of \cref{prop:FI}]
\emph{Fisher information is a measure with respect to $\theta$ on probability distributions.}
That the Fisher information of any set is non-negative and the Fisher information of the empty set is zero is immediately clear from the definition. Using the following relation for the outcome of a series of $R$ measurements, each described by the PDF $\func{p}{x_i;\theta}$
\[ \func{p}{\bm{x};\theta} = \func{p}{\{x_1, x_2, \cdots, x_R\};\theta} = \prod^R_{i=1} \func{p}{x_i;\theta},\]
then we can show countable additivity is satisfied by the following relation for the Fisher information of $R$ measurements
\be \func{I}{\func{p}{\bm{x};\theta}} = \sum^R_{i=1} \func{I}{\func{p}{x_i;\theta}}.\ee
Meaning that each measurement PDF corresponds to a disjoint set in the total probability space.
\end{proof}


\end{document}